\DeclareMathOperator{\rank}{rank}
\DeclareMathOperator{\diag}{diag}
\DeclareMathOperator{\col}{col}
\DeclareMathOperator*{\argmin}{arg\,min}
\newtheorem{theorem}{Theorem}%
\newtheorem{lemma}{Lemma}%
\newtheorem{example}{Example}%
\newtheorem{remark}{Remark}%
\newtheorem{assumption}{Assumption}%
\title{\LARGE \bf
Regularized Model Predictive Control$^{*}$
}
\author{Komeil Nosrati$^{1}$, Juri Belikov$^{2}$, Aleksei Tepljakov$^{1}$, and Eduard Petlenkov$^{1}$
\thanks{$^{*}$This work was supported by the European Union’s Horizon Europe research \& innovation programme under the grant agreement No. 101120657, project ENFIELD (European Lighthouse to Manifest Trustworthy and Green AI), by the Estonian Research Council through the grant PRG1463, by the Estonian Centre of Excellence in Energy Efficiency, ENER (grant TK230) funded by the Estonian Ministry of Education \& Research, and by the European Union and Estonian Research Council via project TEM-TA78.}
\thanks{$^{1}$Komeil Nosrati, Aleksei Tepljakov, and Eduard Petlenkov are with the Department of Computer Systems, Tallinn University of Technology, Estonia (email: {\{komeil.nosrati, aleksei.tepljakov, eduard.petlenkov\}@taltech.ee}).}%
\thanks{$^{2}$Juri Belikov is with the Department of Software Science,
Tallinn University of Technology, Estonia (email: {juri.belikov@taltech.ee}).}}
\begin{document}

\maketitle
\thispagestyle{empty}
\pagestyle{empty}

\begin{abstract}
In model predictive control (MPC), the choice of cost-weighting matrices and designing the Hessian matrix directly affects the trade-off between rapid state regulation and minimizing the control effort. However, traditional MPC in quadratic programming relies on fixed design matrices across the entire horizon, which can lead to suboptimal performance. This study presents a Riccati equation-based method for adjusting the design matrix within the MPC framework, which enhances real-time performance. We employ a penalized least-squares (PLS) approach to derive a quadratic cost function for a discrete-time linear system over a finite prediction horizon. Using the method of weighting and enforcing the equality constraint by introducing a large penalty parameter, we solve the constrained optimization problem and generate control inputs for forward-shifted horizons. This process yields a recursive PLS-based Riccati equation that updates the design matrix as a regularization term in each shift, forming the foundation of the regularized MPC (Re-MPC) algorithm. To accomplish this, we provide a convergence and stability analysis of the developed algorithm. Numerical analysis demonstrates its superiority over traditional methods by allowing Riccati equation-based adjustments.
\end{abstract}
\begin{keywords}
Discrete-time, least-squares, optimal control, predictive control, regularization, convergence, stability.
\end{keywords}
\section{INTRODUCTION}
Model predictive control (MPC) has emerged as a powerful strategy due to its ability to systematically handle constraints and optimize performance over a finite horizon~\cite{ref1}. Its applications span various fields, including process control, robotics, and autonomous systems~\cite{ref2}. A critical component of MPC is the cost function, which incorporates state and input weighting matrices~\cite{ref3}. These matrices dictate the trade-off between rapid system regulation and control effort, thereby directly influencing closed-loop performance~\cite{ref4}. However, traditional MPC frameworks rely on fixed weighting matrices, typically determined through user experience or trial-and-error~\cite{ref5}. This static selection often leads to suboptimal performance, particularly in dynamic or uncertain environments where system conditions and objectives evolve over time~\cite{ref6}. Consequently, systematic methods are increasingly needed to adapt these matrices for robust, efficient control~\cite{ref7}.

Since the weighting matrices determine the Hessian (or design matrix) used in optimization, their static configuration inherently restricts adaptability~\cite{ref23}. Significant research has been conducted to address the challenge of tuning weighting matrices and the resulting design matrix. Traditional methods often involve manual tuning or heuristic approaches, which are time-consuming and lack theoretical guarantees~\cite{ref8}. More advanced techniques include optimization-based methods, such as Bayesian optimization~\cite{ref9}, multi-objective Bayesian optimization~\cite{ref10}, and reinforcement learning~\cite{ref11}, which aim to automate the tuning process. Additionally, adaptive MPC frameworks have been proposed, where weighting matrices are adjusted online based on system performance or environmental changes~\cite{ref6}. While these methods offer improvements, they often suffer from computational inefficiency and scalability issues, particularly in high-dimensional control problems~\cite{ref13}:
\begin{itemize}
    \item Although analytical results exist for simple systems~\cite{ref14}, real-world applications often involve high-order or marginally stable dynamics, making explicit tuning formulas impractical.
    \item Multi-variable systems introduce high-dimensional parameter spaces with interactions, limiting the effectiveness of single-scalar tuning or heuristic methods~\cite{ref15}.
    \item User performance requirements evolve over time, requiring a systematic approach that accommodates multiple objectives instead of relying on trial-and-error~\cite{ref17}.
\end{itemize}
In response to the challenges of automatic MPC cost function regularization---eliminating the need for time-consuming trial-and-error and adapting to changing environmental conditions beyond the capability of fixed-cost controllers~\cite{ref18}---this work introduces a novel regularized MPC (Re-MPC) framework for linear time-invariant (LTI) systems to dynamically update the design matrix through a penalized least squares (PLS)-based Riccati equation. By employing this recursive equation, the framework iteratively adjusts the design matrix, effectively incorporating a regularization term at each time step. The key contributions are as follows:
\begin{itemize}
    \item A least squares problem with equality constraints (LSE) was developed for Re-MPC design using a weighting method and a penalty parameter to enforce constraints.
  \item The existence condition was examined, and a PLS-based Riccati equation was derived to update the design matrix and generate control inputs for shifted horizons.
  \item The unique positive definite (PD) solution of the Riccati equation and stability of the closed-loop system were proven under controllability and detectability assumptions using strong induction and proof by contradiction.
\end{itemize}
Compared to existing MPC techniques~\cite{ref23}, the developed Re-MPC algorithm offers superior state regulation and control efficiency by continuously updating the design matrix, an effective solution for control problems with adaptable closed-loop dynamics. Within the algorithm, the recursive Riccati equation updates this matrix as a regularization term in each iteration, ensuring dynamic adaptability and enhanced stability. While a penalty parameter is used to enforce the equality constraint, a key advantage of the Re-MPC is the elimination of the need for this auxiliary parameter and its absence in the final result when deriving the exact solution. This advantage, along with the matrix-based approach to feedback gain and Riccati equation derivation, makes the Re-MPC useful for online applications and a basis for future robust MPC designs for uncertain systems.

\textit{Notations}: $\mathbb{N}$ and $\mathbb{C}$ represent sets of natural and complex numbers. $\mathbb{R}^n$ and $\mathbb{R}^{n \times m}$ are the real $n$-dimensional vector and $n \times m$ matrix, respectively. $A^\mathsf{T}$ is the transpose of $A$, and $B = B^\mathsf{T} \succ 0$ is a symmetric PD matrix. $C \succeq 0$ and $D \preceq 0$ denote positive semidefinite (PSD) and negative semidefinite (NSD) matrices, respectively. $I_n$ is the identity matrix of order $n$, $\mathbf{1}_n$ is a column vector of ones of length $n$, and $S_{n}$ is the lower shift matrix with $s_{ij} = \delta_{i,j+1}$, where $\delta$ is the Kronecker delta. $t_f \in \mathbb{N}$ and $l \in \mathbb{N}$, $0 \leq l \leq t_f$, are the time and prediction horizons, respectively. $\col\{A_1,A_2\} = \col\{A_j\}_{j=1}^2$ denotes a column vector, $x^*$ is the optimal value of $x$, $\left\|x\right\|_A^2 = x^\mathsf{T} A x$ is the weighted norm, and $\mathcal{N}(A)$ is the null space of $A$. Moreover, $\oplus$ and $\otimes$ denote the direct sum and Kronecker product, and $\dagger$ denotes the Moore-Penrose pseudoinverse.

\section{Problem Formulation}
Consider a discrete LTI system over a finite time horizon of length $t_f$, as described by
\begin{equation}\label{eq1}
\begin{aligned}
x_{k+1} &= Ax_{k}+Bu_{k},  \quad   x_0  \in \mathbb{R}^{n},  \quad 0 \leq k \leq t_f,
\end{aligned}
\end{equation}
where $x  \in \mathbb{R}^{n}$ is the state vector, $u  \in \mathbb{R}^{m}$ is the control input, and $A \in \mathbb{R}^{n \times n}$ and $B \in \mathbb{R}^{n \times m}$ are constant matrices. We aim to design a control input $u_k$ for system \eqref{eq1}, using the proposed Re-MPC method, to stabilize the state $x_k$ at the origin. To achieve this, we first introduce a set of assumptions regarding the system dynamics and terminal constraints, which form the foundation for the Re-MPC design.

\begin{assumption}\label{ass1}
The state and control input are subject to linear inequality constraints represented by
\begin{equation}\label{eq2}
\begin{aligned}
\mathbb{X}=\{x \in \mathbb{R}^n | F_x x \leq g_x\}, \quad   \mathbb{U}=\{u \in \mathbb{R}^m | F_u u \leq g_u\},
\end{aligned}
\end{equation}
where $F_x \in \mathbb{R}^{2n \times n}$ and $g_x \in \mathbb{R}^{2n}$ are constant matrix and vector, respectively, representing the state constraints, and $F_u \in \mathbb{R}^{2m \times m}$ and $g_u \in \mathbb{R}^{2m}$ are constant matrix and vector, respectively, specifying the input constraints.
\end{assumption}
\begin{assumption}[\cite{ref19}]\label{ass2} The LTI system~\eqref{eq1} is controllable, i.e., for all $z \in \mathbb{C}$, we have $\rank(\begin{bmatrix}zI-A & B\end{bmatrix})=n$.
\end{assumption}

In the context of a receding horizon or MPC approach over the prediction horizon $l$, we set the following cost function:
\begin{equation}\label{eq3}
\begin{aligned}
J_k = \|x_{k+l|k}\|_{P_{k+l}}^2+ \sum_{i=k}^{k+l-1} \|x_{i|k}\|_{Q}^2+\|u_{i|k}\|_{R}^2,
\end{aligned}
\end{equation}
where $P_{k+l} \succ 0$, $Q \succeq 0$, and $R \succ 0$ represent PD weighting matrices. Based on this definition, we introduce another assumption as follows:

\begin{assumption}[\cite{ref19}]\label{ass3} The pair $(A,Q)$ is detectable, i.e., for all $z \geq 1 \in \mathbb{C}$, we have $\rank(\col\{zI-A,Q\})=n$.
\end{assumption}

We define the predicted state and control sequences as
\begin{equation}\label{eq4}
\begin{aligned}
X_{k}=\col\left\{x_{k+j|k}\right\}_{j=1}^l, \quad U_{k}=\col\left\{u_{k+j|k}\right\}_{j=0}^{l-1},
\end{aligned}
\end{equation}
where the initial state $x_{k|k}$ corresponds to the current state at the $k^{\text{th}}$ horizon. Based on this setup, we aim to achieve the following two objectives:
\begin{itemize}
    \item \textit{Objective I}: For the discrete LTI system~\eqref{eq1} with $x_{k|k}=x_k$, the goal is to compute the optimal sequences of states $X_{k}^{*}$ and control inputs $U_{k}^{*}$ by solving the following optimization problem:
    \begin{equation}\label{eq5}
    \begin{aligned}
    &\argmin_{U_{k},X_{k}}J_k,\\
    & \text{s.t.} \quad x_{i+1|k}= Ax_{i|k}+Bu_{i|k}, \,\, \, x_{i|k} \in \mathbb{X},\\
    &X_{k} \in \mathbb{X}^{l}, \, \, U_{k} \in \mathbb{U}^{l}, \, \, k \le i \le k+l-1,
    \end{aligned}
    \end{equation}
    which leads to the formulation of the Re-MPC algorithm.
    \item \textit{Objective II}: Identify the conditions under which the obtained sequences $X_{k}^{*}$ and $U_{k}^{*}$ guarantee the convergence and stability of the algorithm.   
\end{itemize}
To accomplish these objectives, we employ a PLS-based framework, which effectively addresses the constrained optimization problem~\eqref{eq5}. By re-examining the standard algebraic Riccati equation and utilizing the obtained feedback gains, we establish the necessary conditions for the stability and convergence of the developed Re-MPC algorithm.

\section{Re-MPC Algorithm}
To address the first objective, we introduce a sequence of key lemmas that establish fundamental connections essential to solving our problem. We start by considering a general quadratic optimization problem
\begin{equation}\label{eq6}
\argmin_{\eta} \{ (G\eta - h)^\mathsf{T} W (G\eta - h) \}
\end{equation}
where $\eta \in \mathbb{R}^{m}$ is an unknown vector, $h \in \mathbb{R}^{n}$ and $G \in \mathbb{R}^{n \times m}$ are known vector and matrix, respectively, and $W \in \mathbb{R}^{n \times n}$ is a known PD weighting matrix.

\begin{lemma}[\cite{ref20}]\label{lem1}
Let the matrix $G$ have full column rank, i.e., $\rank (G) = m$, and define $\Theta_1 = G^\mathsf{T} W G$ and 
\begin{equation}\label{eq7}
\Theta_2 = \begin{bmatrix} W^{-1} & G \\ G^\mathsf{T} & 0 \end{bmatrix}.
\end{equation}
Then, $\Theta_1$ and $\Theta_2$ are invertible matrices. This guarantees that the optimization problem~\eqref{eq6} has a unique optimal solution, given by\footnote{In the case that $W \succeq 0$, the solution is $\hat{\eta} = (G^\mathsf{T} W G)^{\dagger} G^\mathsf{T} W h$. Thus, for a PSD matrix $Q \succeq 0$, the pseudoinverse is used, but its definiteness does not affect the derivation of the algorithm, as will be evident from the results. For clarity, we assume $Q \succ 0$ to use standard inverse notation.}
\begin{equation}\label{eq8}
\begin{aligned}
   \hat{\eta}=\begin{bmatrix}0 \\ I\end{bmatrix}^\mathsf{T}\begin{bmatrix} W^{-1} & G \\ G^\mathsf{T} & 0 \end{bmatrix}^{-1}\begin{bmatrix} h \\ 0\end{bmatrix}= (G^\mathsf{T} W G)^{-1} G^\mathsf{T} W h.
    \end{aligned}
\end{equation}
\end{lemma}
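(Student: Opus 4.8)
The plan is to establish the three claims in sequence: invertibility of $\Theta_1 = G^\mathsf{T} W G$, invertibility of $\Theta_2$, and the equivalence of the two closed-form expressions for $\hat{\eta}$, which simultaneously shows that this vector is the unique minimizer of~\eqref{eq6}. First, I would argue that $\Theta_1$ is invertible: since $W \succ 0$, we may write $W = W^{1/2} W^{1/2}$, so $\Theta_1 = (W^{1/2} G)^\mathsf{T} (W^{1/2} G)$; because $G$ has full column rank and $W^{1/2}$ is nonsingular, $W^{1/2} G$ has full column rank $m$, hence $\Theta_1$ is an $m \times m$ symmetric PD matrix and thus invertible. For the minimization itself, expanding the objective $\|G\eta - h\|_W^2 = \eta^\mathsf{T} \Theta_1 \eta - 2\eta^\mathsf{T} G^\mathsf{T} W h + h^\mathsf{T} W h$ and setting the gradient $2\Theta_1 \eta - 2 G^\mathsf{T} W h$ to zero yields the normal equations $\Theta_1 \eta = G^\mathsf{T} W h$; since $\Theta_1 \succ 0$ the stationary point $\hat\eta = \Theta_1^{-1} G^\mathsf{T} W h$ is the unique global minimizer, recovering the right-hand expression in~\eqref{eq8}.

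Next I would show $\Theta_2$ is invertible and that its block-inverse formula reproduces the same $\hat\eta$. Here the natural tool is the Schur complement: the $(1,1)$ block $W^{-1}$ is invertible, and the Schur complement of $W^{-1}$ in $\Theta_2$ is $0 - G^\mathsf{T}(W^{-1})^{-1}G = -G^\mathsf{T} W G = -\Theta_1$, which we have already shown to be nonsingular. Therefore $\Theta_2$ is invertible, with $\det \Theta_2 = \det(W^{-1}) \det(-\Theta_1) \neq 0$. To extract the relevant block of $\Theta_2^{-1}$, I would either invoke the standard $2\times 2$ block-inverse formula or, more cleanly, solve the linear system
\begin{equation}\label{eqaux1}
\begin{bmatrix} W^{-1} & G \\ G^\mathsf{T} & 0 \end{bmatrix} \begin{bmatrix} a \\ b \end{bmatrix} = \begin{bmatrix} h \\ 0 \end{bmatrix}
\end{equation}
for the auxiliary variable $b = \hat\eta$: the second block-row gives $G^\mathsf{T} a = 0$, the first gives $a = W(h - G b)$, and substituting yields $G^\mathsf{T} W h - G^\mathsf{T} W G b = 0$, i.e.\ $b = \Theta_1^{-1} G^\mathsf{T} W h$. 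Thus $\hat\eta = \begin{bmatrix}0 & I\end{bmatrix} \Theta_2^{-1} \begin{bmatrix} h \\ 0\end{bmatrix} = b = (G^\mathsf{T} W G)^{-1} G^\mathsf{T} W h$, which is precisely~\eqref{eq8}.

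I do not anticipate a genuinely hard step here, as the result is classical (KKT/saddle-point form of a weighted least-squares problem); the only point requiring mild care is keeping the block-matrix bookkeeping consistent — in particular verifying that the Schur complement argument applies and that the sign in $-\Theta_1$ does not obstruct invertibility. An alternative to the Schur-complement route is a direct verification that the candidate inverse
\begin{equation}\label{eqaux2}
\Theta_2^{-1} = \begin{bmatrix} W - WG\Theta_1^{-1}G^\mathsf{T}W & WG\Theta_1^{-1} \\ \Theta_1^{-1}G^\mathsf{T}W & -\Theta_1^{-1} \end{bmatrix}
\end{equation}
satisfies $\Theta_2 \Theta_2^{-1} = I$ by block multiplication; this is purely mechanical and also exposes the $(2,1)$ block $\Theta_1^{-1}G^\mathsf{T}W$ that, applied to $\col\{h,0\}$, immediately gives $\hat\eta$. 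Either way, the footnote's PSD case follows by replacing $\Theta_1^{-1}$ with $\Theta_1^{\dagger}$ and noting that the normal equations $\Theta_1 \eta = G^\mathsf{T} W h$ remain consistent, so I would remark on this only briefly.
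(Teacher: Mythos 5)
Your proof is correct. Note that the paper does not actually prove Lemma~1 --- it is stated as a cited result from the reference on Kalman filtering for descriptor systems --- so there is no in-paper argument to compare against. Your derivation (full column rank of $G$ plus $W \succ 0$ giving $\Theta_1 \succ 0$ via the factorization $\Theta_1 = (W^{1/2}G)^\mathsf{T}(W^{1/2}G)$, the normal equations for uniqueness, and the Schur complement of $W^{-1}$ in $\Theta_2$ being $-\Theta_1$ to obtain invertibility and the $(2,1)$ block $\Theta_1^{-1}G^\mathsf{T}W$ of the inverse) is the standard saddle-point/KKT argument and correctly establishes both closed forms in~\eqref{eq8}; your block-elimination of the auxiliary system for $b=\hat\eta$ is a clean way to avoid writing out the full block inverse.
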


\begin{lemma}[\cite{ref21}]\label{lem2}
Consider a constrained version of the problem~\eqref{eq6}, where we introduce the equality constraint $F \eta= \varphi$, with $F \in \mathbb{R}^{k \times m}$ and $\varphi \in \mathbb{R}^{k}$. According to~\cite{ref22}, a solution to this problem is unique if and only if $\rank(F) = k$ and $\rank(\col\{G,F\}) = m$, or equivalently, $\mathcal{N}(G) \cap \mathcal{N}(F) = \{0\}$. Using the weighting method for solving LSE, we enforce the constraint by introducing a large penalty parameter $\mu$. This reformulates the problem as an unconstrained least squares problem and leads to the following PLS-based formulation:  
\begin{equation}\label{eq9}
\argmin_{\eta} \{ (\Bar{G}\eta - \Bar{h})^\mathsf{T} \Bar{W} (\Bar{G}\eta - \Bar{h}) \},
\end{equation}
where $\Bar{G} = \col\{G, F\}$, $\Bar{h} = \col\{h, \varphi\}$, and $\Bar{W} = (W \oplus \mu I)$. From Lemma~\ref{lem1}, the unique solution to~\eqref{eq9} is:
\begin{equation}\label{eq10}
\begin{aligned}
\hat{\eta}_\mu = \begin{bmatrix}0 & I\end{bmatrix} 
\begin{bmatrix} \Bar{W}^{-1} & \Bar{G} \\ 
\Bar{G}^\mathsf{T} & 0 \end{bmatrix}^{-1} 
\begin{bmatrix} \Bar{h} \\ 0\end{bmatrix}.
\end{aligned}
\end{equation}
For large values of $\mu$, we obtain $\lim_{\mu \to \infty} \hat{\eta}_\mu = \hat{\eta}_{\text{LSE}}$, where $\hat{\eta}_{\text{LSE}}$ is the exact optimal solution of~\eqref{eq6} subject to the equality constraint $F \eta= \varphi$, given by  
\begin{equation}\label{eq11}
\begin{aligned}
   \hat{\eta}=\begin{bmatrix}0 \\ 0 \\ I\end{bmatrix}^\mathsf{T}\begin{bmatrix} W^{-1} & 0 & G \\ 0 & 0 & F \\ G^\mathsf{T}  & F^\mathsf{T} & 0 \end{bmatrix}^{-1}\begin{bmatrix} h \\ \varphi \\ 0\end{bmatrix}.
    \end{aligned}
\end{equation}
\end{lemma}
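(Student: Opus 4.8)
\emph{Proof proposal.} The statement packages three things: the uniqueness criterion for the equality‑constrained least‑squares problem, the legitimacy of replacing it by the penalized problem~\eqref{eq9}, and the limit $\lim_{\mu\to\infty}\hat\eta_\mu=\hat\eta_{\mathrm{LSE}}$ together with the explicit form~\eqref{eq11}. The plan is to handle them in that order. For uniqueness I would note that, since $W\succ0$, the objective in~\eqref{eq6} is convex and becomes \emph{strictly} convex on the feasible affine set $\{\eta:F\eta=\varphi\}$ precisely when $\eta^\mathsf{T}G^\mathsf{T}WG\eta>0$ for every nonzero $\eta\in\mathcal N(F)$; because $W\succ0$, this form vanishes only on $\mathcal N(G)$, so the requirement is $\mathcal N(G)\cap\mathcal N(F)=\{0\}$, which, since $\mathcal N(\col\{G,F\})=\mathcal N(G)\cap\mathcal N(F)$, is the same as $\rank(\col\{G,F\})=m$; existence of a feasible point for every $\varphi$ additionally forces $\rank(F)=k$. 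This is essentially the cited LSE result~\cite{ref22}, so I would keep this part brief.

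Next, for the weighting reformulation: since $\mathcal N(\bar G)=\mathcal N(G)\cap\mathcal N(F)=\{0\}$, the stacked matrix $\bar G=\col\{G,F\}$ has full column rank $m$, and $\bar W=W\oplus\mu I\succ0$ for every $\mu>0$, so Lemma~\ref{lem1} applies verbatim to~\eqref{eq9} and yields the unique $\hat\eta_\mu$ in the block form~\eqref{eq10}, equivalently $\hat\eta_\mu=(G^\mathsf{T}WG+\mu F^\mathsf{T}F)^{-1}(G^\mathsf{T}Wh+\mu F^\mathsf{T}\varphi)$ through the normal equations. Substituting $\bar W^{-1}=W^{-1}\oplus\mu^{-1}I$ and $\bar G=\col\{G,F\}$ into~\eqref{eq10} exhibits the inverse there as $M_\mu^{-1}$, where $M_\mu$ is the $(n+k+m)$‑square matrix obtained from the coefficient matrix in~\eqref{eq11} by replacing its $(2,2)$ zero block by $\mu^{-1}I$; correspondingly the selector $\begin{bmatrix}0&I\end{bmatrix}$ refines to $\begin{bmatrix}0&0&I\end{bmatrix}$, so $\hat\eta_\mu=\begin{bmatrix}0&0&I\end{bmatrix}M_\mu^{-1}\col\{h,\varphi,0\}$.

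The crux is the limit $\mu\to\infty$: clearly $M_\mu\to M$, the coefficient matrix in~\eqref{eq11}, so by continuity of matrix inversion it suffices to show $M$ is nonsingular. I would do this by solving $M\,\col\{p,q,\eta\}=0$: the middle block row gives $F\eta=0$, the first gives $p=-WG\eta$, and the third gives $G^\mathsf{T}WG\eta=F^\mathsf{T}q$; left‑multiplying the last equation by $\eta^\mathsf{T}$ and using $F\eta=0$ forces $\eta^\mathsf{T}G^\mathsf{T}WG\eta=0$, hence $G\eta=0$ since $W\succ0$, hence $\eta\in\mathcal N(G)\cap\mathcal N(F)=\{0\}$; then $p=0$, and $F^\mathsf{T}q=0$ with $\rank(F)=k$ (so $F^\mathsf{T}$ injective) gives $q=0$. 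Thus $M$ is invertible, $M_\mu$ is invertible for all large $\mu$, $M_\mu^{-1}\to M^{-1}$, and $\hat\eta_\mu\to\begin{bmatrix}0&0&I\end{bmatrix}M^{-1}\col\{h,\varphi,0\}$, which is~\eqref{eq11}. To close the loop I would identify~\eqref{eq11} with the exact constrained optimum by reading $M\,\col\{p,q,\eta\}=\col\{h,\varphi,0\}$ as the KKT system for the constrained version of~\eqref{eq6}: $p=W(h-G\eta)$ is the weighted residual, $F\eta=\varphi$ is primal feasibility, and $G^\mathsf{T}W(G\eta-h)+F^\mathsf{T}q=0$ is Lagrangian stationarity; for a convex program whose objective is strictly convex on the feasible subspace, this is necessary and sufficient, so the unique $\eta$‑component is the unique minimizer, and $\mu$ has dropped out of~\eqref{eq11}. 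The main obstacle is exactly the nonsingularity of the degenerate limiting matrix $M$ (whose $(2,2)$ block is zero) and the rigor of the $\mu\to\infty$ passage, since that is where both hypotheses $\rank(F)=k$ and $\mathcal N(G)\cap\mathcal N(F)=\{0\}$ are genuinely used; the remaining steps are citations or routine block‑matrix bookkeeping.
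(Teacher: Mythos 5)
Your proposal is correct. Note, however, that the paper offers no proof of this lemma at all: it is stated as an imported result, with the uniqueness criterion attributed to~\cite{ref22} and the weighting/penalty construction to~\cite{ref21}, so there is no in-paper argument to compare yours against. What you have supplied is a complete, self-contained justification, and each step checks out: the identification $\mathcal{N}(\col\{G,F\})=\mathcal{N}(G)\cap\mathcal{N}(F)$ and the strict-convexity-on-the-feasible-subspace argument for uniqueness; the observation that $\bar{G}$ has full column rank so Lemma~\ref{lem1} applies verbatim to~\eqref{eq9}; and, most importantly, the rewriting of the block inverse in~\eqref{eq10} as $M_\mu^{-1}$ with $M_\mu\to M$, followed by the null-space computation showing $M$ is nonsingular (which is exactly where $\rank(F)=k$ and $\mathcal{N}(G)\cap\mathcal{N}(F)=\{0\}$ enter) and the continuity of inversion to pass to the limit. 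Your closing identification of $M\,\col\{p,q,\eta\}=\col\{h,\varphi,0\}$ with the KKT system of the constrained problem is the right way to certify that the limit is indeed $\hat{\eta}_{\text{LSE}}$ rather than merely some limit point. This is essentially the standard analysis of the method of weighting in Björck's book, and your version correctly isolates the one genuinely delicate point --- the nonsingularity of the degenerate limiting saddle-point matrix whose $(2,2)$ block is zero --- which the paper glosses over by citation. The only cosmetic remark: for the ``only if'' half of the uniqueness claim you gesture at the citation rather than arguing it, which is acceptable given that the lemma itself does the same, but a one-line remark that any nonzero element of $\mathcal{N}(G)\cap\mathcal{N}(F)$ can be added to a minimizer without changing feasibility or cost would make that direction explicit too.
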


According to Lemma~\ref{lem2}, the penalty parameter transforms constrained problems into unconstrained ones by penalizing deviations from the equality constraints within the objective function. Building on these foundational results, we now present a recursive approach to solving~\eqref{eq5}, which leads to the derivation of the Re-MPC formulation.

\begin{theorem}\label{thm1}
Under inactive inequality constraints, the optimal recursive solution to the problem~\eqref{eq5} is given by
\begin{equation}\label{eq12}
\Bar{U}_k^* = K_k x_{k|k}, \quad 0 \leq k \leq t_f,
\end{equation}
where $\Bar{U}_k^*$ stacks optimal decision variables as $\Bar{U}_{k}^{*}=\col\{X_{k}^{*},U_{k}^{*}\}$, and $K_k=\col\{K_{X_k},K_{U_k}\}$ is a gain matrix:
\begin{equation}\label{eq13}
K_k = \begin{bmatrix}0 & I \end{bmatrix}
\begin{bmatrix}
\mathscr{H}^{-1} & \mathscr{B} \\ 
\mathscr{B}^\mathsf{T} & 0
\end{bmatrix}^{-1}
\begin{bmatrix}\mathscr{A} \\ 0 \end{bmatrix}.
\end{equation}
Here, the augmented matrices $\mathscr{A}=\col\{0,\mathscr{A}_2\}$ and $\mathscr{B}=\col\{I,\mathscr{B}_2\}$ encode the system dynamics~\eqref{eq1} as
\begin{equation}\label{eq14}
\begin{aligned}
\mathscr{A}_2&=\begin{bmatrix}-I \\ \Bar{A} \end{bmatrix}, \quad \quad \mathscr{B}_2=\begin{bmatrix} 0 & 0 \\ \Bar{B}_1 & -\Bar{B}_2 \end{bmatrix},
\end{aligned}
\end{equation}
with $\Bar{A}=\col\{A,0,\ldots,0\} \in \mathbb{R}^{\Bar{n} \times n}$, $\Bar{B}_2=\diag\{B,\ldots,B\}\in \mathbb{R}^{\Bar{n} \times \Bar{m}}$, and $\Bar{B}_1=I-\Tilde{A} \in \mathbb{R}^{\Bar{n} \times \Bar{n}}$, where $\Tilde{A}=\diag\{A,\ldots,A\}S_{\Bar{n}} \in \mathbb{R}^{\Bar{n} \times \Bar{n}}$ in which $\Bar{n}=l \, n$, and $\Bar{m}=l \, m$. Also, the augmented matrix $\mathscr{H} = (H_1 \oplus H_2)$ encodes the weighting matrices as
\begin{equation}\label{eq15}
\begin{aligned}
H_1&=\begin{bmatrix} \Bar{Q}_{k+l} & 0 \\ 0 & \Bar{R}\end{bmatrix}, \quad  H_2=\begin{bmatrix}Q & 0 \\0 & \mu I \end{bmatrix},
\end{aligned}
\end{equation}
where $\Bar{Q}_{k+l}=\diag\{Q,\ldots,Q,P_{k+l}\} \in \mathbb{R}^{\Bar{n} \times \Bar{n}}$ and $\Bar{R}=\diag\{R,\ldots,R\} \in \mathbb{R}^{\Bar{M} \times \Bar{M}}$. Furthermore, the minimum cost associated with the optimal solution is $J^{*}=x_{k|k}^\mathsf{T}P_{k+l-1}x_{k|k}$, where $P_{k+l-1}=K_{k}^\mathsf{T}H_{1} K_{k}+Q+\mu\mathscr{R}^\mathsf{T}\mathscr{R}$ with $\mathscr{R}=\Bar{B}_1K_{X_k}-\Bar{B}_2K_{U_k}-\Bar{A}$ representing the residual term.
\end{theorem}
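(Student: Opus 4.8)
The plan is to recast the dynamics-constrained version of~\eqref{eq5} as a single equality-constrained least-squares problem in the stacked decision variable $\Bar{U}_k=\col\{X_k,U_k\}$, apply the method of weighting (Lemma~\ref{lem2}), and read off both the feedback law and the optimal value from Lemma~\ref{lem1}; throughout, the inequality constraints in~\eqref{eq5} are taken to be inactive, so that~\eqref{eq12} is the resulting interior feedback law. Concretely, I would first stack the prediction equations $x_{i+1|k}=Ax_{i|k}+Bu_{i|k}$, $k\le i\le k+l-1$. Using the block-shift and block-diagonal structure of the statement, a direct computation yields $X_k=\Tilde{A}X_k+\Bar{A}x_{k|k}+\Bar{B}_2U_k$, so the dynamics reduce to the single linear equality $\Bar{B}_1X_k-\Bar{B}_2U_k=\Bar{A}x_{k|k}$ with $\Bar{B}_1=I-\Tilde{A}$, i.e. $F\Bar{U}_k=\varphi$ for $F=\begin{bmatrix}\Bar{B}_1 & -\Bar{B}_2\end{bmatrix}$, $\varphi=\Bar{A}x_{k|k}$. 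In parallel, regrouping the stage and terminal penalties in~\eqref{eq3} gives $J_k=\|X_k\|_{\Bar{Q}_{k+l}}^2+\|U_k\|_{\Bar{R}}^2+\|x_{k|k}\|_Q^2=\Bar{U}_k^\mathsf{T}H_1\Bar{U}_k+\|x_{k|k}\|_Q^2$, which is a quadratic $(G\Bar{U}_k-h)^\mathsf{T}W(G\Bar{U}_k-h)$ with $G=\col\{I,0\}$, $W=H_1\oplus Q$, $h=\col\{0,-x_{k|k}\}$. This is exactly the setting of Lemma~\ref{lem2} with $\eta=\Bar{U}_k$.

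Next I would append the constraint with penalty $\mu$, giving the PLS problem~\eqref{eq9} with $\Bar{G}=\col\{G,F\}$, $\Bar{h}=\col\{h,\varphi\}$, $\Bar{W}=W\oplus\mu I$; one then checks directly that $\Bar{G}=\mathscr{B}$, $\Bar{h}=\mathscr{A}x_{k|k}$, $\Bar{W}=\mathscr{H}$ with $\mathscr{A},\mathscr{B},\mathscr{H}$ precisely as in~\eqref{eq14}--\eqref{eq15}. Before invoking Lemma~\ref{lem1} I would discharge its hypotheses: $Q,R,P_{k+l}\succ 0$ and $\mu>0$ make $\mathscr{H}=H_1\oplus H_2\succ 0$, hence invertible (for $Q\succeq 0$ the pseudoinverse version of the footnote applies verbatim); $\mathscr{B}$ has full column rank because it contains an identity block; and the uniqueness conditions of Lemma~\ref{lem2}, namely $\rank(F)=\Bar{l}$ and $\mathcal{N}(G)\cap\mathcal{N}(F)=\{0\}$, hold because $\Tilde{A}$ is nilpotent (block strictly lower triangular), so $\Bar{B}_1=I-\Tilde{A}$ is invertible (hence $\rank(F)=\Bar{l}$) and $\mathcal{N}(G)=\{0\}$. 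Lemma~\ref{lem1} then delivers the unique minimiser $\begin{bmatrix}0 & I\end{bmatrix}\begin{bmatrix}\mathscr{H}^{-1} & \mathscr{B}\\ \mathscr{B}^\mathsf{T} & 0\end{bmatrix}^{-1}\begin{bmatrix}\mathscr{A}x_{k|k}\\ 0\end{bmatrix}=(\mathscr{B}^\mathsf{T}\mathscr{H}\mathscr{B})^{-1}\mathscr{B}^\mathsf{T}\mathscr{H}\mathscr{A}\,x_{k|k}$, which is linear in $x_{k|k}$; factoring $x_{k|k}$ out of the right-hand side gives $\Bar{U}_k^*=K_kx_{k|k}$ with $K_k$ as in~\eqref{eq13}. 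By Lemma~\ref{lem2}, letting $\mu\to\infty$ sends this to the exact equality-constrained optimum, which is why $\mu$ does not appear in the final recursion.

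For the optimal value I would substitute $\Bar{U}_k=K_kx_{k|k}$, written blockwise as $K_k=\col\{K_{X_k},K_{U_k}\}$, into the penalized objective. The block identity $\mathscr{B}K_kx_{k|k}-\mathscr{A}x_{k|k}=\col\{K_{X_k}x_{k|k},\,K_{U_k}x_{k|k},\,x_{k|k},\,\mathscr{R}x_{k|k}\}$ with $\mathscr{R}=\Bar{B}_1K_{X_k}-\Bar{B}_2K_{U_k}-\Bar{A}$, followed by weighting with $\mathscr{H}=\diag\{\Bar{Q}_{k+l},\Bar{R},Q,\mu I\}$, collapses the objective to $x_{k|k}^\mathsf{T}(K_{X_k}^\mathsf{T}\Bar{Q}_{k+l}K_{X_k}+K_{U_k}^\mathsf{T}\Bar{R}K_{U_k}+Q+\mu\mathscr{R}^\mathsf{T}\mathscr{R})x_{k|k}=x_{k|k}^\mathsf{T}(K_k^\mathsf{T}H_1K_k+Q+\mu\mathscr{R}^\mathsf{T}\mathscr{R})x_{k|k}$, i.e. $J^*=x_{k|k}^\mathsf{T}P_{k+l-1}x_{k|k}$; since $\Bar{Q}_{k+l}$ carries the terminal weight $P_{k+l}$, the map $P_{k+l}\mapsto P_{k+l-1}$ defined this way is the claimed PLS-based Riccati update. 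I expect the main obstacle to be the bookkeeping of the first step — verifying that the shift-matrix algebra genuinely compresses the $l$ coupled prediction equations into the single block equality and that the penalty and weighting blocks line up exactly with $\mathscr{A},\mathscr{B},\mathscr{H}$ — together with stating cleanly the regime ($\mu\to\infty$, inactive inequalities) in which~\eqref{eq12} solves~\eqref{eq5}; given the reformulation, the rest is a direct application of Lemmas~\ref{lem1}--\ref{lem2} and a routine block expansion.
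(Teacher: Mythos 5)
Your proposal is correct and follows essentially the same route as the paper: stack the $l$ prediction equations into the single equality $\Bar{B}_1X_k-\Bar{B}_2U_k=\Bar{A}x_{k|k}$, absorb it via the method of weighting into the augmented quadratic $(\mathscr{B}\Bar{U}_k-\mathscr{A}x_{k|k})^\mathsf{T}\mathscr{H}(\mathscr{B}\Bar{U}_k-\mathscr{A}x_{k|k})$, invoke Lemmas~\ref{lem1}--\ref{lem2} for the feedback law~\eqref{eq12}--\eqref{eq13}, and expand blockwise to obtain $J^{*}=x_{k|k}^\mathsf{T}(K_k^\mathsf{T}H_1K_k+Q+\mu\mathscr{R}^\mathsf{T}\mathscr{R})x_{k|k}$. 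You are in fact more explicit than the paper on two points it leaves implicit --- the verification of the rank/invertibility hypotheses (via the nilpotency of $\Tilde{A}$) and the caveat that the closed form treats the inequality constraints as inactive --- both of which are welcome additions.
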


\begin{proof}
When all inequality constraints are inactive, we can rewrite the problem~\eqref{eq5} as
\begin{equation}\label{eq16}
\begin{aligned}
\argmin_{U_{k},X_{k}}\quad &X_k^\mathsf{T} \Bar{Q}_{k+l} X_k + U_k^\mathsf{T} \Bar{R} U_k + x_{k|k}^\mathsf{T} Q x_{k|k},\\
\text{s.t.} \quad &\Bar{B}_1X_k = \Bar{A} x_{k|k} + \Bar{B}_2 U_k,
\end{aligned}
\end{equation}    
To derive the exact optimal solution~\eqref{eq12}, we rewrite the constrained optimization problem~\eqref{eq16} in an alternative form with a large penalty parameter $\mu$, given by
\begin{equation}\label{eq17}
\begin{aligned}
\Bar{U}_k^{*} &= \underset{\Bar{U}_k}{\argmin} \left\{ \|\Bar{U}_k\|_{H_1}^2 + \|\mathscr{B}_2 \Bar{U}_k - \mathscr{A}_2 x_{k|k} \|_{H_2}^2 \right\} \\  
&= \underset{\Bar{U}_k}{\argmin} \{ (\mathscr{B} \Bar{U}_k - \mathscr{A} x_{k|k})^\mathsf{T} \mathscr{H} (\mathscr{B} \Bar{U}_k - \mathscr{A} x_{k|k}) \}.
\end{aligned}
\end{equation}  
By applying Lemma~\ref{lem2}, we obtain the optimal solution as~\eqref{eq12}. Moreover, substituting this solution into~\eqref{eq17} yields
\begin{equation}\label{eq18}
\begin{aligned}
J^{*} &= (\mathscr{B} K_k x_{k|k} - \mathscr{A} x_{k|k})^\mathsf{T} \mathscr{H} (\mathscr{B} K_k x_{k|k} - \mathscr{A} x_{k|k})\\  
&= (\Bar{K}_k x_{k|k})^\mathsf{T} \mathscr{H} (\Bar{K}_k x_{k|k})\\  
&= x_{k|k}^\mathsf{T} \big( K_k^\mathsf{T} H_1 K_k + Q + \mu \mathscr{R}^\mathsf{T} \mathscr{R} \big) x_{k|k},  
\end{aligned}
\end{equation}  
where $\Bar{K}_k = \col\{K_k, I, \mathscr{R}\}$. The proof follows for
\begin{equation}\label{eq19}
\begin{aligned}
 P_{k+l-1}=K_k^\mathsf{T} H_1 K_k + Q + \mu \mathscr{R}^\mathsf{T} \mathscr{R}.
\end{aligned}
\end{equation} 
\end{proof}

\begin{remark}\label{rem1}
The formulation of Re-MPC is inspired by the structural characteristics of classical quadratic-cost designs, combined with the penalty function approach (see~\cite{ref22}). This approach enables the extension of recursive Riccati equation-based control to MPC. As $\mu \to \infty$, equation~\eqref{eq17} converges to the constrained problem~\eqref{eq16} (see Lemma~\ref{lem2}). Within this penalized framework, the last term in~\eqref{eq19} vanishes as $\mathscr{R} \to 0$, improving controller accuracy and yielding an exact solution to the constrained problem that satisfies the system in~\eqref{eq1}. This leads to the following recursive Riccati equation
\begin{equation}\label{eq20}
\begin{aligned}
P_{k+l-1} = K_{X_k}^\mathsf{T} \Bar{Q}_{k+l} K_{X_k} + K_{U_k}^\mathsf{T} \Bar{R} K_{U_k} + Q.
\end{aligned}
\end{equation}
\end{remark}

\begin{remark}\label{rem1new}
When inequality constraints are active, problem~\eqref{eq16} includes the additional condition $\Bar{F} \Bar{U}_k \leq \Bar{g}$, where $\Bar{F} = (F_{X_k} \oplus F_{U_k})$ and $\Bar{g} = (g_{X_k} \oplus g_{U_k})$, with $F_{X_k} = (I_l \otimes F_x)$, $F_{U_k} = (I_l \otimes F_u)$, $g_{X_k} = (\mathbf{1}l \otimes g_x)$, and $g_{U_k} = (\mathbf{1}_l \otimes g_u)$. This problem can be solved numerically using {\tt fmincon} in MATLAB\footnote{Specifically, the problem can be solved as $\arg \min_{\Bar{U}_{k}} f(\Bar{U}_k)=\Bar{U}_k^{\mathsf{T}}H_1 \Bar{U}_k + cte$ with the equality constraint $\Bar{B}_1 X_k = \Bar{A} x_{k|k} + \Bar{B}_2 U_k$ rewritten as $\begin{bmatrix}\Bar{B}_1 & -\Bar{B}_2\end{bmatrix} \Bar{U}_k = \Bar{A} x_{k|k}$, which corresponds to the $A_{eq} x = b_{eq}$ formulation in {\tt fmincon}, while the inequality constraint $\Bar{F} \Bar{U}_k \leq \Bar{g}$ maps directly to $A x \leq b$.}, and the cost-to-go matrix $P_{k+l-1}$ in~\eqref{eq20} contributes to the design of the Re-MPC by updating the terminal cost matrix $\Bar{Q}_k$ and the design matrix $H_1$ via~\eqref{eq15}, effectively serving as a regularization term in each shift.
\end{remark}

Building on this insight and Theorem~\ref{thm1}, we propose the Re-MPC design for the discrete LTI system in~\eqref{eq1}, summarized in Algorithm\footnote{The operator $[\cdot]_1$ outputs only the first entry of the vector $[\cdot]$.}~\ref{alg1}. In contrast to traditional MPC~\cite{ref23}, Re-MPC adjusts the design matrix $H_{1}$ at each iteration based on the solution of equation~\eqref{eq19}, thereby enhancing the optimization process in the predictive control. In our algorithm, as the parameter $\mu$ tends to infinity, it effectively vanishes within the algorithm. This procedure ensures that Re-MPC converges and remains stable at each time step, regardless of the value of $\mu$ (see the next section for proof).

\begin{algorithm}[t]
\vspace*{0.4mm}
\caption{Regularized MPC Algorithm.}\label{alg1}
\begin{algorithmic}[1]
\STATE Input: $A$, $B$, $l$, $P_{k+l}$, $Q$, $R$, $F_x$, $g_x$, $F_u$, $g_u$, $\mu$
\STATE Initialization: $x_{0|0}=x_0$
\STATE Construct $\Bar{A}$, $\Bar{B}_1$, $\Bar{B}_2$, $\Bar{Q}_{k+l}$, $\Bar{R}$, $\Bar{F}$, $\Bar{g}$
\STATE Construct $X_k$, $U_k$, and $\Bar{U}_{k}=\col\{X_{k},U_{k}\}$
\STATE Construct matrix $H_1$ and $F_{eq}=\begin{bmatrix} \Bar{B}_1 & -\Bar{B}_2 \end{bmatrix}$

\FOR {$k=0$ to $t_f-1$}
\STATE Compute the MPC gain $K_k$ by~\eqref{eq13}.
\STATE Define the function $f(\Bar{U}_k)=\Bar{U}_k^{\mathsf{T}}H_1 \Bar{U}_k$.
\STATE Compute $\Bar{U}_{k}^{*}$ using~\eqref{eq12} (in-active inequalities).
\STATE Compute $\Bar{U}_{k}^{*}$ using {\tt fmincon} (active inequalities).
\STATE Set $u_k=u_{k|k}=[U_k^*]_1$ and $x_{k+1}=x_{k+1|k}=[X_k^*]_1$.
\STATE Compute $P_{k+l-1}$ using~\eqref{eq19} backward in time.
\STATE Update $\Bar{Q}_k$ and $H_1$ using~\eqref{eq15}.
\ENDFOR
\end{algorithmic} 
\end{algorithm}

\section{Convergence and Stability of Re-MPC}
To meet Objective II, we derive sufficient conditions for stability and convergence of the Re-MPC algorithm for a large value of $\mu$. To do so, we analyze the Riccati equation~\eqref{eq20} along with the obtained feedback gains~\eqref{eq13}. Using Lemma~\ref{lem1} and some algebraic simplifications, we obtain
\begin{equation}\label{eq21}
\begin{aligned}
\begin{bmatrix}
    X_{k}^{*} \\ U_{k}^{*}
\end{bmatrix} =\begin{bmatrix} \Bar{Q}_{k+l} & 0 \\ 0 & \Bar{R}\end{bmatrix}^{-1}\begin{bmatrix} \Bar{B}_1 & -\Bar{B}_2 \end{bmatrix}^\mathsf{T}O_{k+l}^{-1}\Bar{A}x_{k|k},
\end{aligned}
\end{equation}
where $O_{k+l} = \Bar{B}_1\Bar{Q}_{k+l}^{-1}\Bar{B}_1^\mathsf{T}+\Bar{B}_2\Bar{R}^{-1}\Bar{B}_2^\mathsf{T}$.

\begin{remark}\label{rem3}
To ensure the feasibility of the Re-MPC algorithm, it is crucial that the matrix $O_{k+l}$ is invertible. Given that $\Bar{Q}_{k+l} \succ 0$ and $\Bar{R} \succ 0$, this requirement is satisfied if $\text{rank} (\begin{bmatrix} \Bar{B}_1 & -\Bar{B}_2 \end{bmatrix}) = \Bar{n}$. Given the structure of $\Bar{B}_1$ as a lower triangular matrix with an identity matrix on its diagonal, it is always full-rank. As a result, the rank condition is always satisfied as long as $\Bar{B}_2$ does not introduce rank deficiency.
\end{remark}

\begin{remark}\label{rem4}
By applying equation~\eqref{eq21} and defining the augmented matrices in Theorem~\ref{thm1} as index matrices, i.e., $\Bar{A}_l := \Bar{A}$, $\Bar{B}_{i,l} := \Bar{B}_i$ for $i = 1, 2$, and $\Bar{R}_l := \Bar{R}$, within a prediction horizon of size $l$, and after some algebraic manipulations, the Riccati equation~\eqref{eq20} can be rewritten as
\begin{equation}\label{eq22}
\begin{aligned}
P_{k+l-1}&=\Bar{A}_l^\mathsf{T}(\Bar{B}_{1,l}\Bar{Q}_{k+l}^{-1}\Bar{B}_{1,l}^\mathsf{T}+\Bar{B}_{2,l}\Bar{R}_l^{-1}\Bar{B}_{2,l}^\mathsf{T})^{-1}\Bar{A}_l+Q,
\end{aligned}
\end{equation}
or alternatively, in a steady-state scenario,
\begin{equation}\label{eq23}
P_l=\Bar{A}_l^\mathsf{T}(\Bar{B}_{1,l}\Bar{Q}_l^{-1}\Bar{B}_{1,l}^\mathsf{T}+\Bar{B}_{2,l}\Bar{R}_{l}^{-1}\Bar{B}_{2,l}^\mathsf{T})^{-1}\Bar{A}_l+Q.
\end{equation}
\end{remark}

\begin{theorem}\label{thm2}
Under Assumptions~\ref{ass2} and \ref{ass3}, the Riccati equation~\eqref{eq23} admits a unique solution $P_l \succ 0$ for the entire process as the prediction horizon shifts within the total time horizon $t_f$.
\end{theorem}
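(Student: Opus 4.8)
The plan is to reduce the steady-state Riccati equation~\eqref{eq23} to a \emph{classical} discrete-time algebraic Riccati equation (DARE) so that the standard existence/uniqueness/positivity theory (under controllability and detectability) applies. The first step is to simplify the matrix $M_l := \Bar{B}_{1,l}\Bar{Q}_l^{-1}\Bar{B}_{1,l}^\mathsf{T}+\Bar{B}_{2,l}\Bar{R}_l^{-1}\Bar{B}_{2,l}^\mathsf{T}$ using the explicit block structure of $\Bar{B}_{1,l} = I - \Tilde{A}$ (lower triangular, unit diagonal), $\Bar{B}_{2,l} = \diag\{B,\ldots,B\}$, and $\Bar{Q}_l = \diag\{Q,\ldots,Q,P_l\}$. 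Because $\Bar{B}_{1,l}$ is invertible (Remark~\ref{rem3}), $M_l \succ 0$ whenever $\Bar{Q}_l \succ 0$ and $\Bar{R}_l \succ 0$, so $M_l^{-1}$ exists; I would then push the Sherman--Morrison--Woodbury identity (or a direct block elimination on the saddle-point system in~\eqref{eq13}) through $M_l^{-1}$ to expose the term $\Bar{A}_l^\mathsf{T} M_l^{-1} \Bar{A}_l$. Since $\Bar{A}_l = \col\{A,0,\ldots,0\}$ has only its first block nonzero, only the top-left block of $M_l^{-1}$ survives, and the recursion should collapse to the familiar one-step form
\begin{equation}\label{eq:darereduction}
\begin{aligned}
P_l = A^\mathsf{T}\!\big(P_l^{-1} + B R^{-1} B^\mathsf{T}\big)^{-1}\! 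A + Q = A^\mathsf{T} P_l A - A^\mathsf{T} P_l B (R + B^\mathsf{T} P_l B)^{-1} B^\mathsf{T} P_l A + Q,
\end{aligned}
\end{equation}
where the second equality is the matrix-inversion lemma. (If a horizon-dependent scaling of $Q$ or $R$ appears, it is absorbed into effective weights $Q_l \succeq 0$, $R_l \succ 0$ that inherit the detectability/positivity hypotheses.)

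Once~\eqref{eq:darereduction} is established, I would invoke the standard DARE theorem: if $(A,B)$ is controllable (Assumption~\ref{ass2}), $(A,Q)$ is detectable (Assumption~\ref{ass3}), $R \succ 0$ and $Q \succeq 0$, then the DARE has a unique stabilizing solution, and that solution is symmetric PSD; controllability strengthens this to $P_l \succ 0$ (a controllable pair cannot have a nontrivial subspace on which the state cost stays zero along the closed loop). To keep the letter self-contained I would sketch the two halves: (i) \emph{existence} via the monotone bounded sequence generated by the finite-horizon Riccati iteration $P^{(0)} = 0 \preceq P^{(1)} \preceq \cdots$, whose boundedness follows from stabilizability and which therefore converges to a fixed point; (ii) \emph{uniqueness} by the contradiction argument already advertised in the introduction — if $P_l$ and $P_l'$ were two PD solutions, subtracting the two Riccati identities and using the closed-loop matrix $A_{cl} = A - B(R+B^\mathsf{T}P_lB)^{-1}B^\mathsf{T}P_lA$ yields a Stein/Lyapunov equation $\Delta = A_{cl}^\mathsf{T}\Delta A_{cl}'$ for $\Delta = P_l - P_l'$, and detectability forces $A_{cl}$ to be Schur, hence $\Delta = 0$. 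Positive definiteness then follows because $P_l \succeq Q \succeq 0$ and, if $x^\mathsf{T} P_l x = 0$ for some $x \neq 0$, unrolling the Riccati recursion forces $Q^{1/2}A^j x = 0$ and $B^\mathsf{T}(A^\mathsf{T})^j(\cdot) = 0$ for all $j$, contradicting detectability (or controllability) of the relevant pair.

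The main obstacle is the \emph{reduction step}: showing rigorously that the block saddle-point expression~\eqref{eq13}–\eqref{eq23}, built from the $\Bar{l}\times\Bar{l}$ augmented matrices with the lower-shift structure $\Tilde{A} = \diag\{A,\ldots,A\}S_{\Bar{l}}$, is \emph{exactly} equivalent to the scalar-horizon DARE~\eqref{eq:darereduction} and not merely a bound. Concretely I would verify that $\Bar{B}_{1,l}^{-1}\Bar{A}_l$ reconstructs the lifted prediction map $x_{k+j|k} = A^j x_{k|k} + (\text{control terms})$, so that minimizing $\|X_k\|_{\Bar{Q}_l}^2 + \|U_k\|_{\Bar{R}_l}^2$ subject to the lifted dynamics is the same finite-horizon LQ problem whose cost-to-go is the $l$-fold Riccati iterate; its stationarity as the horizon ``shifts within $t_f$'' is exactly the fixed-point condition~\eqref{eq23}. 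Handling the $P_l$ that appears \emph{inside} $\Bar{Q}_l$ (a genuine fixed-point coupling, not a terminal constant) is the delicate point, and I expect to deal with it by the monotonicity/continuity argument in step (i) rather than by a closed-form manipulation. The detectability condition in Assumption~\ref{ass3} (stated for $z \geq 1$) is precisely what rules out eigenvalues on or outside the unit circle being unobservable through $Q$, which is what both the boundedness of the iteration and the Schur property of $A_{cl}$ require; I would make sure the proof uses it in exactly those two places.
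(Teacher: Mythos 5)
Your strategy coincides with the paper's: use the fact that $\Bar{A}_l=\col\{A,0,\ldots,0\}$ touches only the leading block of $M_l^{-1}$, collapse~\eqref{eq23} to the one-step DARE $P_l=A^\mathsf{T}(P_l^{-1}+BR^{-1}B^\mathsf{T})^{-1}A+Q$, and then invoke classical DARE theory under Assumptions~\ref{ass2} and~\ref{ass3}. Where you diverge is precisely at the step you flag as delicate --- the appearance of $P_l$ inside $\Bar{Q}_l$. The paper resolves it by strong induction on $l$ (with base cases $l=1,2$): a Schur-complement elimination of the first block row and column of $M_l$ shows that the leading block of $M_l^{-1}$ equals $\big((Q+\Bar{A}_{l-1}^\mathsf{T}\Bar{\mathscr{P}}^{-1}\Bar{A}_{l-1})^{-1}+BR^{-1}B^\mathsf{T}\big)^{-1}$, and the inner expression $Q+\Bar{A}_{l-1}^\mathsf{T}\Bar{\mathscr{P}}^{-1}\Bar{A}_{l-1}$ is exactly the horizon-$(l-1)$ equation~\eqref{eq28} with terminal weight $P_l$ in place of $P_{l-1}$; the induction hypothesis identifies it with $P_l$ itself, so~\eqref{eq30} is obtained as an exact algebraic identity rather than a limit. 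You propose instead to close the gap with a monotone-iteration/continuity argument, which would additionally require proving that the only relevant fixed point of the $l$-fold composition of the one-step Riccati map is the fixed point of the map itself --- true under these assumptions, but a lemma that does not follow for free from one-step uniqueness and that you would need to supply. The paper's induction buys an exact closed-form equivalence with no limiting argument; your route is more robust to a messier block structure but carries that extra fixed-point burden, and it re-derives the existence/uniqueness/positivity of the DARE solution that the paper simply cites for the $l=1$ base case. Either way the endgame is identical, so I regard your proposal as correct in substance, with the reduction step being the part you would still have to carry out in the explicit form the paper does.
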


\begin{proof}
To prove this theorem, we use complete (strong) induction, starting with the base cases for $l = 1$ and $l = 2$, assuming the validity of the statement for all values up to $l - 1$, and then proving the result for $l$.

\textit{Base step ($l=1$ and $l=2$):} For $l = 1$, the Riccati equation in~\eqref{eq23} can be rewritten as
\begin{equation}\label{eq24}
\begin{aligned}
P_1&=A_1^\mathsf{T}(B_{1,1}P_1^{-1}B_{1,1}+B_{2,1}R_1^{-1}B_{2,1}^\mathsf{T})^{-1}A_1+Q\\
&=A^\mathsf{T}(P_1^{-1}+BR^{-1}B^\mathsf{T})^{-1}A+Q,
\end{aligned}
\end{equation}
which is a reformulation of the standard Riccati equation
\begin{equation}\label{eq25}
P_1=A^\mathsf{T}P_1A - A^\mathsf{T}P_1B (R+B^\mathsf{T}P_1B)^{-1}B^\mathsf{T}P_1A +Q,
\end{equation}
obtained by applying the Woodbury matrix identity. Therefore, under Assumptions~\ref{ass2} and \ref{ass3}, and given the positive definiteness of the weighting matrices, the Riccati equation for the Re-MPC at this step has a unique PD solution $P_1 \succ 0$. For $l = 2$, the Riccati equation in~\eqref{eq23} can be written as 
\begin{equation}\label{eq26}
\begin{aligned}
P_2&=\begin{bmatrix} A \\ 0 \end{bmatrix}^\mathsf{T}\begin{bmatrix} \mathscr{L} & -Q^{-1}A^\mathsf{T} \\
-AQ^{-1} & \mathscr{P}+AQ^{-1}A^\mathsf{T}\end{bmatrix}^{-1}\begin{bmatrix} A \\ 0 \end{bmatrix}+Q\\
&=A^\mathsf{T}((Q+A^\mathsf{T}\mathscr{P}^{-1}A)^{-1}+BR^{-1}B^\mathsf{T})^{-1}A+Q,
\end{aligned}
\end{equation}
where $\mathscr{L} = Q^{-1}+BR^{-1}B^\mathsf{T}$ and $\mathscr{P} = P_2^{-1}+BR^{-1}B^\mathsf{T}$. According to the base step 1, the term $Q+A^\mathsf{T}\mathscr{P}^{-1}A$ has a unique PD solution, which is given by $P_2$. Therefore, we conclude that 
\begin{equation}\label{eq27}
\begin{aligned}
P_2=A^\mathsf{T}(P_2^{-1}+BR^{-1}B^\mathsf{T})^{-1}A+Q,
\end{aligned}
\end{equation}
and as established for $l = 1$, this equation admits a unique PD solution, $P_2 \succ 0$.

\textit{Assumption step ($l-1$):} The Riccati equation~\eqref{eq23} becomes
\begin{equation}\label{eq28}
\begin{aligned}
P_{l-1}=\Bar{A}_{l-1}^\mathsf{T}(\Bar{B}&_{1,l-1}\Bar{Q}_{l-1}^{-1}\Bar{B}_{1,l-1}^\mathsf{T}\\
                &+\Bar{B}_{2,l-1}\Bar{R}_{l-1}^{-1}\Bar{B}_{2,l-1}^\mathsf{T})^{-1} \Bar{A}_{l-1}+Q.
\end{aligned}
\end{equation}
Let us assume the statement holds for this equality, with a unique PD solution $P_{l-1} \succ 0$.

\textit{Induction step ($l$):} One can rewrite the equation~\eqref{eq23} as 
\begin{equation}\label{eq29}
\begin{aligned}
P_{l}&=\begin{bmatrix} A \\ 0 \end{bmatrix}^\mathsf{T} \begin{bmatrix} \Bar{\mathscr{L}} & Q^{-1}\Bar{A}_{l-1}^\mathsf{T} \\
\Bar{A}_{l-1}Q^{-1} & \Bar{\mathscr{P}}+\Bar{A}_{l-1}Q^{-1}\Bar{A}_{l-1}^\mathsf{T} \end{bmatrix}^{-1} \begin{bmatrix} A \\ 0 \end{bmatrix}+Q\\
&=A^\mathsf{T}((Q+\Bar{A}_{l-1}^\mathsf{T}\Bar{\mathscr{P}}^{-1}\Bar{A}_{l-1})^{-1}+BR^{-1}B^\mathsf{T})^{-1}A+Q,
\end{aligned}
\end{equation}
where $\Bar{\mathscr{L}} = Q+BR^{-1}B^\mathsf{T}$ and $\Bar{\mathscr{P}} = \Bar{B}_{1,l-1}\mathscr{Q}_l^{-1}\Bar{B}_{1,l-1}^\mathsf{T}+\Bar{B}_{2,l-1}\Bar{R}_{l-1}^{-1}\Bar{B}_{2,l-1}^\mathsf{T}$, with $\mathscr{Q}_l=\diag\{Q,\cdots,Q,P_{l}\} \in \mathbb{R}^{\overline{l-1} \times \overline{l-1}}$. According to step $l-1$, the term $Q+\Bar{A}_{l-1}^\mathsf{T}\Bar{\mathscr{P}}^{-1}\Bar{A}_{l-1}$ has a unique PD solution $P_l$. Therefore, we can rewrite~\eqref{eq29} as
\begin{equation}\label{eq30}
\begin{aligned}
P_l=A^\mathsf{T}(P_l^{-1}+BR^{-1}B^\mathsf{T})^{-1}A+Q.
\end{aligned}
\end{equation}
According to the proof in step 1, this equation, which is identical to~\eqref{eq25}, has a unique PD solution since $P_l \succ 0$.
\end{proof}

To prove the stability of the closed-loop system under the developed Re-MPC algorithm, we establish the stability of the system in~\eqref{eq21}. Using strong induction, we show that Assumptions~\ref{ass2} and \ref{ass3} ensure system stability.

\begin{theorem}\label{thm3}
Given the controllability and detectability conditions, the developed Re-MPC Algorithm~\ref{alg1} is stable.
\end{theorem}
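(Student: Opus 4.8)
The plan is to reduce the closed-loop stability of Algorithm~\ref{alg1} to the classical Lyapunov/algebraic-Riccati argument, exploiting the reduction already carried out in the proof of Theorem~\ref{thm2}. By Theorem~\ref{thm2} together with~\eqref{eq30}, the steady-state design matrix $P_l \succ 0$ is the unique positive definite solution of the standard discrete algebraic Riccati equation---that is,~\eqref{eq25} with $P_l$ in place of $P_1$, equivalently~\eqref{eq30} after the Woodbury identity:
\begin{equation*}
P_l = A^\mathsf{T} P_l A - A^\mathsf{T} P_l B (R + B^\mathsf{T} P_l B)^{-1} B^\mathsf{T} P_l A + Q .
\end{equation*}
I would then introduce the stationary feedback gain $K_l = (R + B^\mathsf{T} P_l B)^{-1} B^\mathsf{T} P_l A$ and the closed-loop matrix $A_{\mathrm{cl}} = A - B K_l$.

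Next I would show that the first control move returned by Algorithm~\ref{alg1}, namely $u_{k|k} = [U_k^*]_1$, equals $-K_l x_{k|k}$ whenever the inequality constraints~\eqref{eq2} are inactive (a neighborhood of the origin, which is the regime relevant to asymptotic stability). This follows from~\eqref{eq21} together with the self-consistency of the recursion: since the terminal weight is propagated by~\eqref{eq19}--\eqref{eq20} and every partial cost-to-go equals the Riccati fixed point $P_l$ by Theorem~\ref{thm2}, the finite-horizon problem~\eqref{eq16} is, by dynamic programming, equivalent to the infinite-horizon LQ problem, so its optimal first action is the stationary LQR action and the penalty parameter $\mu$ drops out in the limit. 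Hence the closed-loop recursion is $x_{k+1|k} = [X_k^*]_1 = A_{\mathrm{cl}} x_{k|k}$.

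For the stability conclusion I would take the Lyapunov candidate $V(x) = x^\mathsf{T} P_l x$, which is positive definite since $P_l \succ 0$. The Riccati identity gives $A_{\mathrm{cl}}^\mathsf{T} P_l A_{\mathrm{cl}} - P_l = -(Q + K_l^\mathsf{T} R K_l) =: -M \preceq 0$, so $V(x_{k+1}) - V(x_k) = -x_k^\mathsf{T} M x_k \le 0$ along closed-loop trajectories, i.e.\ $V$ is non-increasing. To upgrade monotonicity to asymptotic stability I would argue by contradiction, mirroring the pattern of Theorem~\ref{thm2}: suppose $A_{\mathrm{cl}} v = \lambda v$ for some $v \ne 0$ with $|\lambda| \ge 1$; then $v^{*} M v = (1 - |\lambda|^2)\, v^{*} P_l v \le 0$, and $M \succeq 0$ forces $v^{*} M v = 0$, hence $|\lambda| = 1$, $Q v = 0$, and $K_l v = 0$ (the last because $R \succ 0$). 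But then $A_{\mathrm{cl}} v = A v = \lambda v$, so $(\lambda I - A) v = 0$ and $Q v = 0$ with $|\lambda| \ge 1$, contradicting Assumption~\ref{ass3}, i.e.\ $\rank(\col\{\lambda I - A,\, Q\}) = n$. Therefore $A_{\mathrm{cl}}$ is Schur, the origin is asymptotically stable (globally for the unconstrained loop, locally once~\eqref{eq2} is reinstated), and a strong-induction pass over the horizon length $l$ with base cases $l = 1, 2$---exactly as in Theorem~\ref{thm2}---transfers the conclusion uniformly to every shifted horizon inside $t_f$, so the transient $P_{k+l} \to P_l$ does not break stability along the way.

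The step I expect to be the main obstacle is the second one: rigorously establishing that the receding-horizon first move collapses to the stationary LQR gain under the self-consistent update~\eqref{eq19}, i.e.\ that the time-varying sequence $P_{k+l}$ generated inside the loop is compatible with and converges to the fixed point $P_l$, and that the pseudoinverse/penalty bookkeeping in~\eqref{eq13} and~\eqref{eq21} indeed reduces to $A - B K_l$ as $\mu \to \infty$. A secondary but necessary point is the explicit statement that, once the inequality constraints~\eqref{eq2} are active, the guarantee is local asymptotic stability within the feasible region rather than global.
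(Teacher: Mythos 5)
Your proposal is correct and follows essentially the same route as the paper's own proof: both reduce the closed-loop map to the standard LQR form $A-BK_l$ via the Woodbury identity (the paper writes it as $P^{-1}(P^{-1}+BR^{-1}B^\mathsf{T})^{-1}A$ in~\eqref{eq32}), and both rule out an eigenvalue with $|z|\geq 1$ by pre- and post-multiplying the Riccati identity by the eigenvector, forcing the PSD term $Q+K^\mathsf{T}RK$ to vanish and contradicting Assumptions~\ref{ass2} and~\ref{ass3}, with a strong induction over the horizon length $l$ handling the stacked predicted-state gains. Your explicit Lyapunov framing, the identification of the ``first move collapses to the stationary LQR gain'' step as the delicate point, and the caveat that the guarantee is only local once the inequality constraints~\eqref{eq2} become active are all refinements the paper leaves implicit.
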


\begin{proof}
From Remark~\ref{rem1},  as $\mu \to \infty$, the residual term $\mathscr{R}$ approaches zero, implying that
\begin{equation}\label{eq31}
\begin{aligned}
\Bar{B}_1K_{X_k}=\Bar{A}+\Bar{B}_2K_{U_k}.
\end{aligned}
\end{equation}

\textit{Base step ($l=1$ and $l=2$):} For $l = 1$, the closed-loop system in~\eqref{eq21} can be stated as
\begin{equation}\label{eq32}
\begin{aligned}
x_{k+1|k}^{*}&=K_{x_{k+1|k}}x_{k|k}\\
&=P_{k+1}^{-1}(P_{k+1}^{-1}+BR^{-1}B^\mathsf{T})^{-1}Ax_{k|k}.
\end{aligned}
\end{equation}
Additionally, the recursive Riccati equation~\eqref{eq20} reduces to
\begin{equation}\label{eq33}
\begin{aligned}
P_{1}=K_{x_1}^\mathsf{T}P_{1}K_{x_1}+K_{u_0}^\mathsf{T}R K_{u_0}+Q,
\end{aligned}
\end{equation}
where $P_1 \succ 0$, as established in Theorem~\ref{thm2}, step 1. Suppose the closed-loop system~\eqref{eq32} is unstable, i.e., $\exists z \in \mathbb{C}, \quad \exists \rho^{\mathsf{H}} \in \mathbb{C}^n$ such that $|z| \geq 1$ and $\rho K_{x_1} = z \rho$. By pre-multiplying equation~\eqref{eq33} by $\rho$ and post-multiplying by $\rho^\mathsf{H}$, we obtain $(1-|z|^2)\rho P_1 \rho^\mathsf{H}=\rho(K_{u_0}^\mathsf{T}R K_{u_0}+Q)\rho^\mathsf{H}$, where the left-hand side is NSD and the right-hand side is PSD. Thus, $\rho (K_{u_0}^\mathsf{T} R K_{u_0} + Q) \rho^\mathsf{H} = 0$. This, along with $\rho(K_{x_1} - A) \rho^\mathsf{H} = \rho B K_{u_0} \rho^\mathsf{H}$, derived from~\eqref{eq31} for $l = 1$ and pre- and post-multiplied by $\rho$ and $\rho^\mathsf{H}$, implies
\begin{equation}\label{eq34}
\begin{aligned}
\rho\begin{bmatrix} I \\ I \\ K_{u_0} \end{bmatrix}^\mathsf{T}
\begin{bmatrix}
 z I-A & -B \\ Q & 0 \\
    0 & R
\end{bmatrix}\begin{bmatrix} I \\ K_{u_0} \end{bmatrix}\rho^{\mathsf{H}}=0.
\end{aligned}
\end{equation}
Since the left and right matrices are nonzero, the equality in~\eqref{eq36} holds only if the middle matrix has a nontrivial null space, contradicting Assumptions~\ref{ass2} and~\ref{ass3}. Therefore, $K_{x_1}$ and the resulting closed-loop system for $l = 1$ are stable.

For $l = 2$, we define the augmented vectors in~\eqref{eq4} as index vectors, i.e., $X_l := X_k$ and $U_l := U_k$, within a prediction horizon of size $l$. The closed-loop system in~\eqref{eq21} is then given by $X_{2}^{*}=K_{X_2}x_{k|k}$, where $X_{2}^{*}=\col\{x_{k+1|k}^{*},x_{k+2|k}^{*}\}$ and $K_{X_2}=\col\{K_{x_{k+1|k}},K_{x_{k+2|k}}\}$. Utilizing the Woodbury matrix identity and block matrix inversion, we obtain
\begin{equation}\label{eq35}
\begin{aligned}
\begin{bmatrix}x_{k+1|k}^{*} \\ x_{k+2|k}^{*}\end{bmatrix}=\begin{bmatrix} P_{k+2}^{-1}(P_{k+2}^{-1}+BR^{-1}B^\mathsf{T})^{-1}A \\ (P_{k+2}^{-1}(P_{k+2}^{-1}+BR^{-1}B^\mathsf{T})^{-1}A)^2\end{bmatrix}x_{k|k}.
\end{aligned}
\end{equation}
The first equation in~\eqref{eq35} mirrors the closed-loop system for $l=1$, but with $P_{k+2}$ instead of $P_{k+1}$. By Theorem~\ref{thm2}, step 1 and $l=2$, the Riccati equation has a steady-state PD solution $P_2 \succ 0$. By analogy to step 1, this equation is stable. Moreover, the gain in the second equation follows the same form as the first but squared. Since its eigenvalues remain within the unit circle, stability is preserved. Thus, the gain matrix $K_{X_2}$ and the closed-loop system remain stable.

\textit{Assumption step ($l-1$):} At this step, the closed-loop system in~\eqref{eq21} is $X_{l-1}^{*}=K_{X_{l-1}}x_{k|k}$, where $X_{l-1}^{*}=\col\{x_{k+j|k}^{*}\}_{j=1}^{l-1}$ and $K_{X_{l-1}}=\col\{K_{x_{k+j|k}}\}_{j=1}^{l-1}$. Assuming that
\begin{equation}\label{eq36}
\begin{aligned}
K_{X_{l-1}}=\Bar{Q}_{l-1}\Bar{B}_{1,l-1}^\mathsf{T}(&\Bar{B}_{1,l-1}\Bar{Q}_{l-1}^{-1}\Bar{B}_{1,l-1}^\mathsf{T}\\
                &+\Bar{B}_{2,l-1}\Bar{R}_{l-1}^{-1}\Bar{B}_{2,l-1}^\mathsf{T})^{-1}\Bar{A}_{l-1}
\end{aligned}
\end{equation}
is stable, we show that stability holds for step $l$ as well.

\textit{Step $l$:} In this step, one can rewrite the closed-loop system in~\eqref{eq21} as $X_{l}^{*}=K_{X_l}x_{k|k}$, where $X_{l}^{*}=\col\{x_{k+1|k}^{*},\mathscr{X}_{k}^{*}\}$ and $K_{X_l}=\col\{K_{x_{k+1|k}},\mathscr{K}_{X_{l}}\}$ with $\mathscr{X}_{k}=\col\{x_{k+j|k}\}_{j=2}^l$ and $\mathscr{K}_{X_{l}}=\col\{K_{x_{k+j|k}}\}_{j=2}^l$. Again, by utilizing the Woodbury matrix identity and block matrix inversion, we obtain
\begin{equation}\label{eq37}
\begin{aligned}
\begin{bmatrix}x_{k+1|k}^{*} \\ \mathscr{X}_{k}^{*}\end{bmatrix}=\begin{bmatrix} P_{k+l}^{-1}(P_{k+l}^{-1}+BR^{-1}B^\mathsf{T})^{-1}A \\ \mathscr{K}_{X_{l}}P_{k+l}^{-1}(P_{k+l}^{-1}+BR^{-1}B^\mathsf{T})^{-1}A\end{bmatrix}x_{k|k},
\end{aligned}
\end{equation}
where
\begin{equation}\label{eq38}
\begin{aligned}
\mathscr{K}_{X_{l}}=\mathscr{Q}_{l}\Bar{B}_{1,l-1}^\mathsf{T}(&\Bar{B}_{1,l-1}\mathscr{Q}_{l}^{-1}\Bar{B}_{1,l-1}^\mathsf{T}\\
                &+\Bar{B}_{2,l-1}\Bar{R}_{l-1}^{-1}\Bar{B}_{2,l-1}^\mathsf{T})^{-1}\Bar{A}_{l-1}.
\end{aligned}
\end{equation}
The first equation in~\eqref{eq37} follows the same structure as the closed-loop system for $l = 1$, but with $P_{k+l}$ instead of $P_{k+1}$. According to Theorem~\ref{thm2}, step $l$, the Riccati equation has a steady-state PD solution, $P_{l} \succ 0$. By analogy to step 1, this equation is stable. Furthermore, the gain matrix in the second equation of \eqref{eq37} is the product of the stable gain from the first equation and another gain, assumed stable from step $l-1$, when replacing $P_{l}$ with $P_{l-1} \succ 0$. Therefore, the eigenvalues of the resulting gain matrix must lie within the unit circle, confirming stability. This proves that the gain matrix $K_{X_l}$ and, consequently, the closed-loop system are stable.
\end{proof}

\section{Simulation Study}
In this section, we provide numerical examples to illustrate the theoretical concepts discussed in the previous sections\footnote{System setup: MATLAB R2024a, Ryzen 9, 32GB RAM, Win 11.}. 
\begin{example}\label{exm1}
Consider the LTI system in~\eqref{eq1} with
\begin{equation*}
\begin{aligned}
A &=\begin{bmatrix}
    0.9 & 0.2 \\ -0.4 & 0.8
\end{bmatrix}, B =\begin{bmatrix}
    0.1 \\ 0.05
\end{bmatrix}, Q =\begin{bmatrix}
    0.5 & -0.5\\-0.5 & 10
\end{bmatrix}, R = 1,
\end{aligned}
\end{equation*}
subject to the states and input constraints
\begin{equation*}
\begin{aligned}
-0.45 \leq x_{1,2} \leq 0.5, \quad -0.25 \leq u \leq 0.25,
\end{aligned}
\end{equation*}
initial conditions $x_0 =(0.5,-0.1)^\mathsf{T}$, and $P_{k+l}=Q$.
Based on the given constraints, we construct the linear inequality constraints in ~\eqref{eq2} by defining $F_x = \operatorname{col}\{I_2, -I_2\}$, $g_x = \operatorname{col} \{0.5, 0.5,0.45,0.45\}$, $F_u = \operatorname{col}\{1, -1\}$, and $g_u = 0.25 \times \mathbf{1}_2$. Let us conduct the simulation with parameters $t_f=50$, $l=2$, and a large penalty value $\mu=10^3$. Applying the algorithm, the following observations are made:
\begin{itemize}
\item As illustrated in Fig.~\ref{fig1}, Re-MPC outperforms classical MPC (C-MPC) in stabilizing $x_1$ and $x_2$ by applying more control effort in the initial steps due to the regulation term and a dynamically adjusted design matrix.
\item Our evaluations show that Re-MPC achieves approximately a $15\%$ improvement in regulation performance compared to C-MPC, as indicated by the mean-square errors (MSE) of $x_1$ and $x_2$ (see Table~\ref{tab1}). Table~\ref{tab1} further highlights the superiority of Re-MPC in reducing total cost compared to existing approaches, despite a slightly higher computational time due to continuously calculating the solution of the Riccati equation at each step and updating the design matrix.
\end{itemize}
\begin{figure}[htbp]
\centering
\includegraphics[width=0.98\columnwidth]{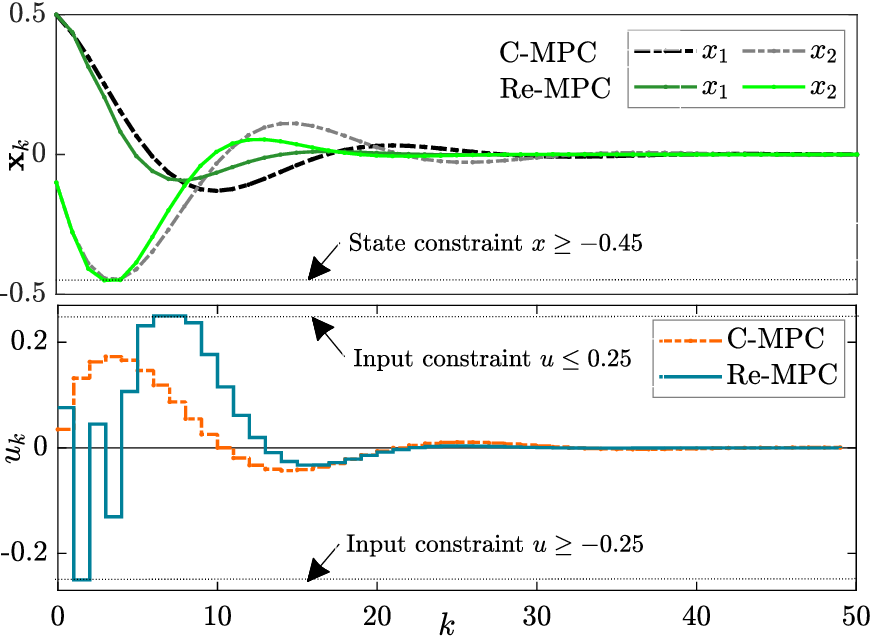}
\caption{Stabilized states and optimal control effort for C-MPC \& Re-MPC.}\label{fig1}
\end{figure}
\begin{table}[b]
\scriptsize
\caption{Performance Analysis I.}
\label{tab1}
\centering
\setlength{\tabcolsep}{3pt}
\begin{tabular}{lcccc}
\toprule
\centering
Method&  MSE (${x_1}$) & MSE (${x_2}$)& $\sum_{i=1}^{50} \|x_{i}\|_{Q}^2+\|u_{i}\|_{R}^2$ & Elapsed time (s) \\
\midrule
\centering
C-MPC~\cite{ref23} & $ 0.0146$ & $0.0218$ & $12.13$ & $2.2$\\
Proposed  & $0.0122$ & $0.0188$ & $10.66$ &  $3.43$\\
\bottomrule
\end{tabular}
\end{table}
\end{example}

\begin{example}\label{exm2} This example analyzes the effect of $\mu$ on the efficiency of Re-MPC through simulations using the same parameters as in the previous example, but with varying $\mu$.
\begin{itemize}
\item As seen in Fig.~\ref{fig2}, for small $\mu$, the algorithm behaves similarly to C-MPC. As it increases, performance improves and approaches the exact optimal solution at infinity (see Remark~\ref{rem1}). As seen, C-MPC serves as a lower bound, while Re-MPC with $\mu \to \infty$ acts as an upper bound for all possible solutions of the algorithm for varying $\mu$. 
\item As shown in Table 2, for small $\mu$, the relative changes (RCs) in the Euclidean norm of the design matrix of the proposed algorithm compared to C-MPC are negligible, and the total cost converges to that of C-MPC.
\end{itemize}

\begin{table}[htbp]
\scriptsize
\caption{Performance Analysis II.}
\label{tab2}
\centering
\begin{tabular}{lcccccc}
\toprule
   & $\mu=100$ & $\mu=50$ & $\mu=25$ & $\mu=10$ & $\mu=1$  \\
\midrule
$\sum_{i=1}^{50} \|x_{i}\|_{Q}^2+\|u_{i}\|_{R}^2$  &  $10.72$ & $10.87$ & $11.09$ & $11.44$ & $12.11$\\
RCs & 1.42 & 1.03 & 0.7 & 0.4 & 0.06\\
\bottomrule
\end{tabular}
\end{table}
\end{example}

\begin{figure}[htbp]
\centering
\includegraphics[width=0.98\columnwidth]{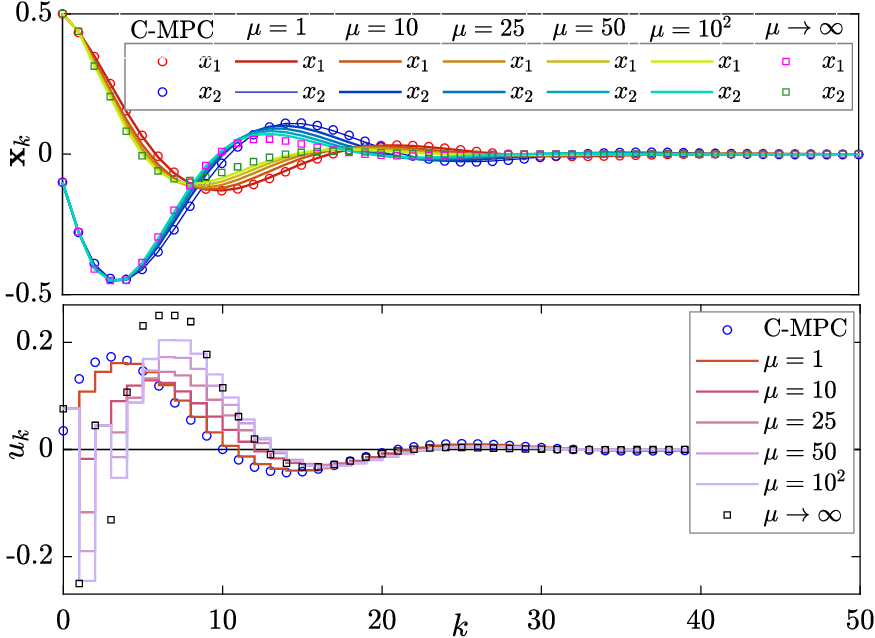}
\caption{Stabilized state and control effort for varying $\mu$.}\label{fig2}
\end{figure}

\section{Conclusions}
This work introduced a Riccati equation-based method for dynamically adjusting the design matrix in the classical MPCs. The proposed Re-MPC algorithm uses a recursive PLS-based Riccati equation to improve the performance of state regulation. Under controllability and detectability assumptions, we established convergence and stability guarantees at steady state. Numerical results demonstrate its superiority in balancing control effort and stability while reducing the total cost. In our method, as the penalty parameter approaches infinity, the algorithm converges to its best performance. This means that no parameter tuning is needed, which makes it perfect for online applications. Future work will extend Re-MPC to nonlinear and robust MPC designs for uncertain systems using its Riccati-based regulation.

\end{document}